\def\mop#1{\mathop{\operator@font {#1\null}}}
\def\GL{\mop{GL}}
\def\HW{\mop{weight}}
\def\bigO{\mop{O}}
\newcommand{\ket}[1]{\mid\hspace*{-4pt}{#1}\rangle}
\newtheorem{example}{{\bf Example}}[section]
\newtheorem{remark}{{\bf Remark}}[section]
\newtheorem{proposition}{{\bf Proposition}}[section]
\newtheorem{theorem}{{\bf Theorem}}[section]
\newtheorem{corollary}{{\bf Corollary}}[section]
\begin{document}
\title{Automatic synthesis of quantum circuits for point addition on ordinary binary elliptic curves}
\author{Parshuram Budhathoki\rule{10ex}{0ex}Rainer Steinwandt\\\\
Florida Atlantic University\\
Department of Mathematical Sciences\\
Boca Raton, FL 33431\\
\texttt{$\{$pbudhath,rsteinwa$\}$@fau.edu}}
\date{}
\maketitle
\begin{abstract}
  Implementing the group arithmetic is a cost-critical task when designing quantum circuits for Shor's algorithm to solve the discrete logarithm problem. We introduce a tool for the automatic generation of addition circuits for ordinary binary elliptic curves, a prominent platform group for digital signatures. Our \texttt{Python} software generates circuit descriptions that, without increasing the number of qubits or $T$-depth, involve less than 39\% of the number of $T$-gates in the best previous construction. The software also optimizes the (CNOT) depth for ${\mathbb F}_2$-linear operations by means of suitable graph colorings.
\end{abstract}

\section{Introduction}
Ordinary binary elliptic curves are an algebraic structure of great cryptographic significance. All binary curves suggested in the Digital Signature Standard \cite{FIPS1864} fall in this class, and the cost of implementing Shor's quantum algorithm \cite{Sho97} in such groups has been explored by various authors. While optimizing the implementation of the Quantum Fourier Transform is a quite well understood task, minimizing the implementation cost of the scalar multiplication in Shor's algorithm remains a design challenge.

Approaches by Kaye and Zalka \cite{KaZa04} and by Maslov et al. \cite{MMCP09b} rely on the use of projective coordinates and efficient circuits for adding fixed (classically precomputed) points in a right-to-left variant of the double-and-add algorithm. In fact, only a ``generic'' addition of a fixed point is implemented, avoiding a handling of special cases of the addition law (doubling a point, adding a point with its inverse, or with the identity element). As observed in \cite{MMCP09b}, it is sufficient to represent the input and output points of such a point addition circuit with projective coordinates.
Amento et al. \cite{ARS12} suggest to replace ordinary projective coordinates with a representation used by Higuchi and Takagi \cite{HiTa00}, therewith reducing the number of $T$-gates\footnote{As is common, we do not distinguish between $T$- and $T^\dagger$-gates in statements on the number of $T$-gates or the $T$-depth.} needed.
Taking the number of $T$-gates as cost measure, this is so far the most efficient implementation proposed, but as noted in \cite{RoSt13}, alternative constructions can reduce the design complexity: If dedicated circuitry for doubling a point is available, scalar multiplication can be realized by invoking only two types of addition circuits---rather than several hundred different ones when dealing with cryptographically significant parameters. These doubling circuits do impact the gate count, however. Happily, with the software tool presented below, designing addition circuits can be automated, making the derivation of a few hundred addition circuits for different points a realistic option.

To optimize circuit depth, \cite{RoSt13} suggest a tree-style organization of the scalar multiplication in Shor's algorithm. However, this method builds on general addition circuits for the elliptic curve, i.\,e., addition circuits which have two variable input points and handle all cases of the addition law. In \cite{RoSt13} complete binary Edwards curves \cite{BLF08} are used for this purpose. While the resulting circuit depth is compelling, the number of $T$-gates and number of qubits is much worse than with a right-to-left double and add procedure. When aiming at a small $T$-gate count, optimizing quantum circuits for the ``generic'' addition of a fixed point appears to be the more preferrable research direction. In this paper, the central optimization criteria is the number of $T$-gates, and as secondary criteria we take the $T$-depth and the number of qubits into account.

\paragraph{Contribution.} Building on an addition formula by Al-Daoud et al. \cite{AMRK02} we show that the number of $T$-gates in the best available circuit to add a fixed point \cite{ARS12b} can be reduced by more than 60\% without affecting the $T$-depth negatively. At the same time the number of qubits can be reduced at the cost of a depth increase of about $4n$ when working with curves over $\mathbb F_{2^n}$. The circuit descriptions are derived automatically, and by means of edge colorings of certain bipartite graphs it is ensured that the involved subcircuits for ${\mathbb F}_2$-linear operations---such as multiplication by a constant, squaring and computing a square root---are optimized. For parameters of interest the latter allow substantial savings in the number of CNOT gates compared to the bounds used in \cite{ARS12b}. Building on an available polynomial-basis arithmetic for the underlying binary field, the {\tt Python}~\cite{Pythonreference} software we introduce synthesizes for a given curve and curve point an optimized addition circuit and outputs this circuit as a {\tt .qc} file. This file can then be processed with {\tt QCViewer} \cite{qcreference}, for instance, or more generally serve as input for automated or manual post-processing.

\paragraph{Structure of this paper.} In the next section we look at the choice of a suitable (polynomial-basis) representation of the underlying finite field and show how edge colorings can be used to find efficient circuits for squaring, constant multiplication, and square root computation. In Section~\ref{sec:ellipticcurve} we combine such circuits with Al-Daoud et al.'s addition formula for ordinary binary elliptic curves to derive a new quantum circuit for point addition with improved $T$-gate complexity. Complementing the theoretical discussion, we discuss concrete examples of circuits that have been synthesized with our software.

\section{Quantum circuits for ${\mathbb F}_{2^n}$-arithmetic}\label{sec:Finitefield}
A binary field ${\mathbb F} _{2^n}$ can be represented in various different ways, resulting in potentially very different quantum circuits to realize the arithmetic. The use of a normal basis has been considered \cite{ARS12}, but for elliptic curve addition with a small $T$-gate complexity, a polynomial basis representation seems the preferrable choice (see the discussion in \cite[Section~2]{ARS12b}). In a polynomial basis representation, $\mathbb{F}_{2^n}$ is expressed as a quotient
$$\mathbb{F}_{2^n} = {\mathbb{F}_{2}[x]}/(p)$$
of the univariate polynomial ring ${\mathbb F}_2[x]$ with binary coefficients, where $p \in \mathbb{F}_{2^n}[x]$ is an irreducible polynomial of degree $n$. Having fixed $p$, each element $a\in \mathbb{F}_{2^n}$ is uniquely represented by a bit vector $(a_0, a_1,\dots, a_{n-1})\in{\mathbb F}_2^n$ such that $a = a_0+a_1x+\dots+a_{n-1}x^{n-1}\pmod p$. This bit vector is naturally represented with $n$ qubits $\ket{a_0}\dots\ket{a_{n-1}}$. To implement point addition with a projective representation on a binary elliptic curve, we rely on addition, multiplication, multiplication with a non-zero constant and squaring in the underlying finite field. Quantum circuits for these tasks are available (cf. \cite{BBF03,KaZa04,MMCP09b,ARS12b}):

\begin{description}
\item[Addition:] To add two field elements $a, b\in{\mathbb F}_{2^n}$, we can simply use $n$ CNOT gates that operate in parallel: $$\ket{a, b} \longmapsto\ket{a,a \oplus b}.$$ Alternatively, if the operands are to remain unchanged, we can implement $$\ket{a,b}\ket{0^n} \longmapsto\ket{a,b}\ket{a \oplus b}$$ in the obvious way with $2n$ CNOT gates in depth $2$.
\item[Multiplication:] Optimizing the field multiplier is outside the scope of this paper, and subsequently we will use a linear-depth construction by Maslov et al. \cite{MMCP09b}. With this method one can multiply two elements $a, b\in{\mathbb F}_{2^n}$ with no more than $n^2$ Toffoli gates and $n^2-1$ CNOT gates. For certain choices of $p$, including trinomials, this bound can be improved further. We note that the point addition circuit developed in Section~\ref{sec:ellipticcurve} treats the underlying ${\mathbb F}_{2^n}$-multiplier as a black box. If more efficient field multipliers become available, integrating these into our synthesis tool should be straightforward.
\item[Multiplication with a non-zero constant and squaring:] Both of these operations are linear, and \cite{ARS12b} argue that an LUP decompositon yields a circuit of depth $\le 2n$ that can be realized with $n^2+n$ CNOT gates. Although no $T$-gates are needed for these operations, optimizing this step further is worthwhile: for the binary elliptic curves in the Digital Signature Standard \cite{FIPS1864}, we have $n\ge 163$ and accordingly the complete scalar multiplicaton in Shor's algorithm involves several hundred addition circuits.
\end{description}

\subsection{Optimizing ${\mathbb F}_2$-linear operations and minimal edge colorings}
Multplication by a constant and squaring are special cases of finding a quantum circuit implementing a map
$$\ket{a}\ket{0^n}\longmapsto\ket{a}\ket{b_0\dots b_{n-1}}$$
where $a=\sum_{i=0}^{n-1}a_ix^i+(p)$ is an arbitrary input from ${\mathbb F}_{2^n}$ and $(b_0,\dots,b_{n-1})=(a_0,\dots,a_{n-1})\cdot M$ for some non-singular matrix $M\in\GL_n({\mathbb F}_2)$. Obviously such a vector-by-matrix multiplication can be implemented with one CNOT gate for each non-zero entry of $M$. This can be done without ancillae qubits using a total of $\HW(M)$ CNOT gates. To minimize the circuit depth we interpret $M=(m_{i,j})_{0\le i<n}$ as biadjacency matrix of a bipartite graph. Namely, the graph associated with $M$ has $2n$ vertices with the vertex set splitting into the ``control part'' $\{a_0,\dots,a_{n-1}\}$ and the ``target part'' $\{b_0,\dots,b_{n-1}\}$. Each CNOT corresponds to exactly one edge: there is an edge between $a_i$ and $b_j$ if and only if $m_{i,j}=1$. An edge coloring of this graph with $d$ colors immediately yields a quantum circuit to multiply by $M$ in depth $d$---all CNOT gates corresponding to an edge of the same color operate on disjoint qubits and therewith can be executed in parallel.

The minimal possible value of $d$ is known as \emph{chromatic index} of the graph. For a bipartite graph the chromatic index is equal to the maximum degree of a vertex, i.\,e., equal to the maximal Hamming weight of the rows and columns of $M$. Efficient classical algorithms for finding such a minimal edge coloring are known (see, e.\,g., \cite{COS01}). For our software implementation we use a solution by Pointdexter \cite{Poi06} to find the required edge colorings.

\begin{proposition}\label{prop:lowdepth}
   Multiplication by a matrix $M\in\GL_n({\mathbb F}_2)$, i.\,e., the map $\ket{u}\ket{v}\longrightarrow \ket{u}\ket{v+M\cdot u}$ with arbitrary input vectors $u,v\in{\mathbb F}_2^n$, can be implemented with $\HW(M)$ CNOT gates. For this, an ancillae-free circuit of depth equal to the maximal Hamming weight of the rows and colums of $M$ is sufficient.
\end{proposition}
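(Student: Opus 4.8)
The plan is to exhibit an explicit CNOT network and then argue correctness, the gate count, and the depth separately. First I would write out the desired action componentwise: the $j$-th target bit should be updated as $v_j \mapsto v_j \oplus \bigoplus_{i\,:\,m_{i,j}=1} u_i$, while every control bit $u_i$ is left untouched. This is precisely the effect of placing, for each nonzero entry $m_{i,j}$ of $M$, a single CNOT with control qubit $\ket{u_i}$ and target qubit $\ket{v_j}$. Since the number of such entries is $\HW(M)$ by definition, and no qubits beyond the $2n$ holding $u$ and $v$ are ever touched, this immediately yields an ancillae-free circuit using exactly $\HW(M)$ CNOT gates.

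Next I would verify correctness independently of the order in which the gates are applied. The key observation is that all these CNOTs pairwise commute: their controls lie entirely in the never-modified $u$-register and their targets lie entirely in the $v$-register, so two gates sharing a target merely XOR fixed control values into it (a commutative operation), while gates with distinct targets act on disjoint target qubits. Hence, under any scheduling, each $\ket{v_j}$ accumulates precisely $\bigoplus_{i\,:\,m_{i,j}=1} u_i = (M\cdot u)_j$ (up to the row/column convention fixed above) and each $\ket{u_i}$ is returned unchanged, so the net map is $\ket{u}\ket{v}\mapsto\ket{u}\ket{v+M\cdot u}$ as required.

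For the depth bound I would reuse the bipartite-graph encoding introduced before the statement. Interpreting $M$ as a biadjacency matrix, each CNOT is an edge joining control vertex $a_i$ to target vertex $b_j$, and two gates can be executed in the same time step exactly when they act on disjoint qubits, i.e.\ when the corresponding edges share no vertex. A set of simultaneously executable gates is therefore a color class of a proper edge coloring, so the minimal achievable depth equals the chromatic index of the graph. Invoking K\"onig's edge-coloring theorem, the chromatic index of a bipartite graph equals its maximum vertex degree; here the degree of $a_i$ is the Hamming weight of row $i$ of $M$ and the degree of $b_j$ is the Hamming weight of column $j$, whence the maximum degree is exactly the maximal Hamming weight over all rows and columns. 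Since the gates of a single color class operate on mutually disjoint qubits, each class executes in one step, producing a circuit of the claimed depth.

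I do not anticipate a genuine obstacle, as the construction is fully explicit and the two claims reduce to a counting argument and a standard graph-theoretic fact. The only point demanding care is the depth argument: one must confirm that parallel executability of CNOTs is equivalent to disjointness of the underlying edges, and then cite K\"onig's theorem---rather than the weaker Vizing bound---to pin the chromatic index of the bipartite graph to its maximum degree. (A matching lower bound follows because the $\Delta$ edges at a vertex of degree $\Delta$ all share that qubit, but the proposition asserts only sufficiency, so this is not needed.)
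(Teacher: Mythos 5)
Your proposal is correct and follows essentially the same route as the paper: one CNOT per nonzero entry of $M$, with the depth bound obtained by interpreting $M$ as the biadjacency matrix of a bipartite graph and applying K\"onig's edge-coloring theorem to equate the chromatic index with the maximal row/column weight. Your added verification that the CNOTs commute and that parallel executability corresponds to edge-disjointness merely makes explicit what the paper leaves implicit.
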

As worst-case bounds this implies the following.
\begin{corollary}
    Multiplication by an arbitrary matrix $M\in\GL_n(\mathbb F_2)$ can be implemented with at most $n^2-n+1$ CNOT gates, using an ancillae-free circuit of depth at most $n$.
\begin{proof}
   Because of Proposition~\ref{prop:lowdepth} it suffices to show that $\HW(M)\le n^2-n+1$. Suppose this is not true, i.\,e., $\HW(M)\ge n^2-n+2$. Then $M$ must contain at least two rows with all entries being equal to $1$, as having $n-1$ rows each of weight $\le n-1$ results in a matrix of weight $\le n+(n-1)\cdot(n-1)=n^2-n+1$. To bring the weight to $n^2-n+2$ at least one more row must be completed to an all-one row. Thus $M$ has two identical rows, which contradicts $M\in\GL_n(\mathbb F_2)$.
\end{proof}
\end{corollary}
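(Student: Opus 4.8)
The plan is to reduce both assertions to Proposition~\ref{prop:lowdepth} and then prove a single combinatorial bound on $\HW(M)$. By that proposition, multiplication by $M$ uses exactly $\HW(M)$ CNOT gates and runs in depth equal to the largest Hamming weight among the rows and columns of $M$. The depth claim is then immediate: every row and every column has Hamming weight at most $n$, so the depth is at most $n$ with no further argument. All the content therefore lies in establishing $\HW(M)\le n^2-n+1$ for invertible $M$, which I would argue by contradiction at exactly the threshold flagged in the statement.

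So suppose instead that $\HW(M)\ge n^2-n+2$. I find it cleanest to count \emph{zero} entries rather than ones: this hypothesis says $M$ has at most $n-2$ zero entries among its $n^2$ positions. Each zero entry lies in exactly one row, so the set of rows containing at least one zero has size at most $n-2$. By pigeonhole the remaining (at least) two rows contain no zero at all, i.e., each equals the all-ones row $(1,\dots,1)$. But then $M$ has two identical rows, which forces $\det M=0$ over $\mathbb F_2$ (as over any field) and contradicts $M\in\GL_n(\mathbb F_2)$. This establishes $\HW(M)\le n^2-n+1$ and hence the CNOT count.

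There is essentially no hard step here; the one thing to get right is the accounting that converts the weight hypothesis into the existence of two all-ones rows. The author's phrasing---at most $n-1$ rows of weight $\le n-1$ contribute weight $\le n+(n-1)^2=n^2-n+1$---is an equivalent way of locating the same threshold, and either route works. Since the corollary only asserts an upper bound, I would not dwell on tightness; I would at most remark that the bound is genuinely worst-case rather than loose, being attained by suitable invertible matrices consisting of one all-ones row together with $n-1$ rows of weight $n-1$ arranged to remain nonsingular.
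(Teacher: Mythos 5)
Your proof is correct and follows essentially the same route as the paper: a contradiction at the threshold $\HW(M)\ge n^2-n+2$, forcing two all-ones rows and hence singularity. Your zero-counting (at most $n-2$ zeros, so at least two rows with no zero) is just a cleaner dual phrasing of the paper's row-weight accounting, and your closing remark that the bound is attained by an invertible matrix of weight $n^2-n+1$ is a correct (optional) addition.
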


\begin{example}[Constant multiplication in ${\mathbb F}_8$] Consider $n=3$ and $p=1+x+x^3$, i.\,e., ${\mathbb F}_{2^3}={\mathbb F}_2[x]/(1+x+x^3)$. Then multiplying an arbitrary polynomial $a=a_0+a_1x+a_2x^2+(p)$ with $1+x+x^2+(p)$ can be interpreted as multiplying the coefficient vector $(a_0, a_1, a_2)$ with the following matrix of weight $6$:
$$\left(\begin{array}{ccc}
  1&1&1\\
  1&0&1\\
  1&0&0\\
  \end{array}
  \right)$$
From this matrix we obtain the subsequent graph with six vertices and six edges.

\begin{center}
\begin{tikzpicture}[every node/.style={circle,draw}]
%  \tikzset{every loop/.style={}}
  \node at (0,0) (a2) {$a_2$};
  \node at (0,1) (a1) {$a_1$};
  \node at (0,2) (a0) {$a_0$};
  \node at (3,0) (b2) {$b_2$};
  \node at (3,1) (b1) {$b_1$};
  \node at (3,2) (b0) {$b_0$};

  \path[dotted] (a0) edge (b0);
  \path[dashed] (a1) edge (b0);
  \path (a2) edge (b0);

  \path (a0) edge (b1);
  \path[dashed] (a0) edge (b2);
  \path (a1) edge (b2);
\end{tikzpicture}
\end{center}

Consequently, we need a total of six CNOT gates, and in accordance with the matrix containing a row (and a column) of weight three, the graph has chromatic index $3$, yielding the quantum circuit shown in Figure~\ref{fig:Mulbycons}. The first three CNOT gates can be executed simultaneously (solid edges), similarly the next two CNOT gates can be applied at the same time (dashed edges), and finally the last CNOT gate can be applied (dotted edge).
\begin{figure}[h]
\centering
  \includegraphics[scale=0.3]{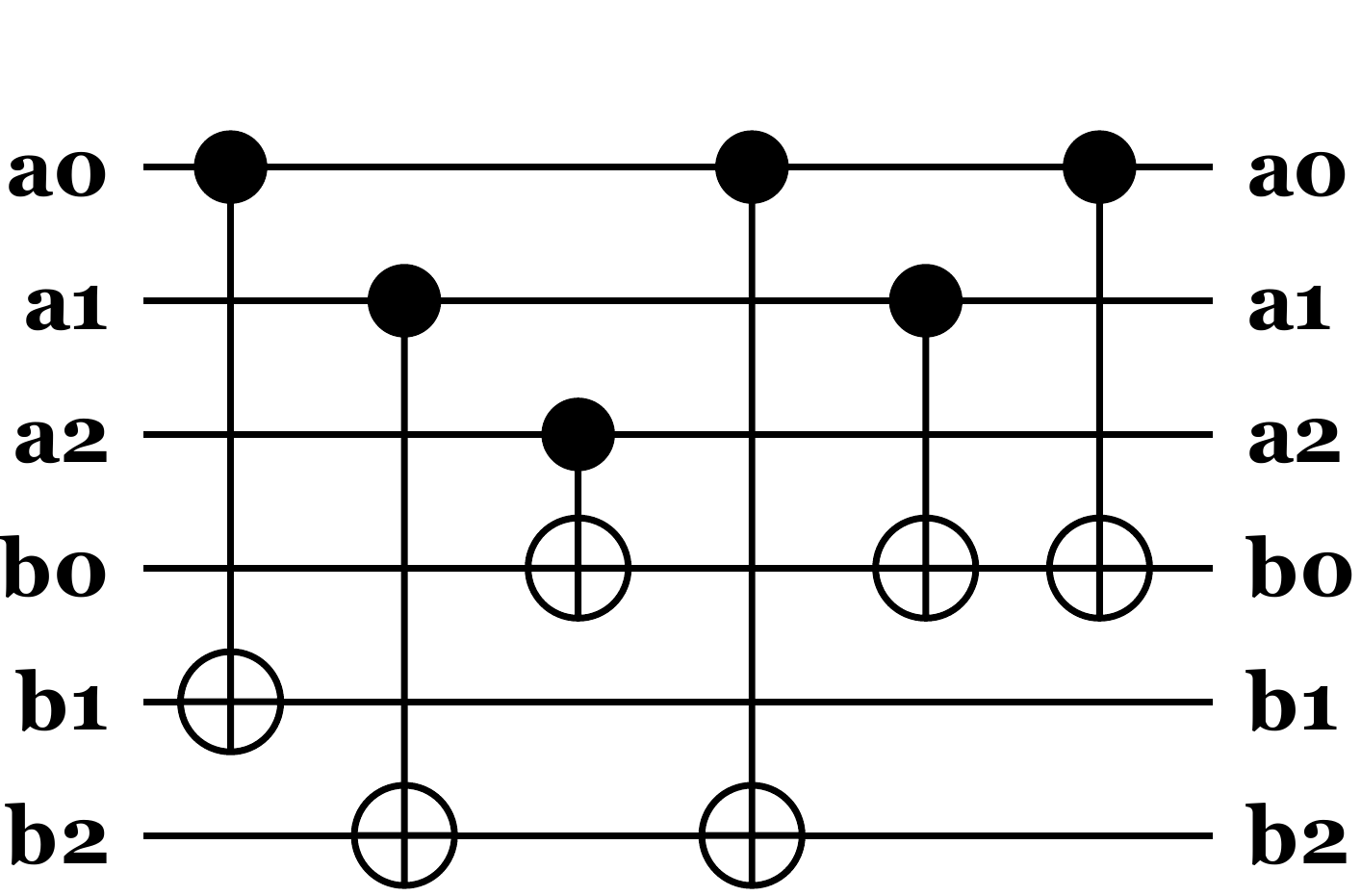}
  \caption{A circuit for ancillae-free multiplication of $a\in \mathbb{F}_{2}[x]/(1+x+x^3)$ with $1+x+x^2+(1+x+x^3)$.\label{fig:Mulbycons}}
\end{figure}
\end{example}

\begin{example}[Squaring in ${\mathbb F}_{128}$] Now let $n=7$ and choose $p=1+x+x^7$. Squaring $a_0+a_1x+\dots+a_6x^6+(p)\in{\mathbb F}_2[x]/(p)$ can be expressed as multiplying the coefficient vector $(a_0,\dots,a_6)\in{\mathbb F}_2^7$ by the matrix
$$\left(\begin{array}{ccccccc}
   1&0&0&0&0&0&0\\
   0&0&1&0&0&0&0\\
   0&0&0&0&1&0&0\\
   0&0&0&0&0&0&1\\
   0&1&1&0&0&0&0\\
   0&0&0&1&1&0&0\\
   0&0&0&0&0&1&1
  \end{array}\right).
$$

\begin{figure}[tbh]
\centering
\includegraphics[scale=.3]{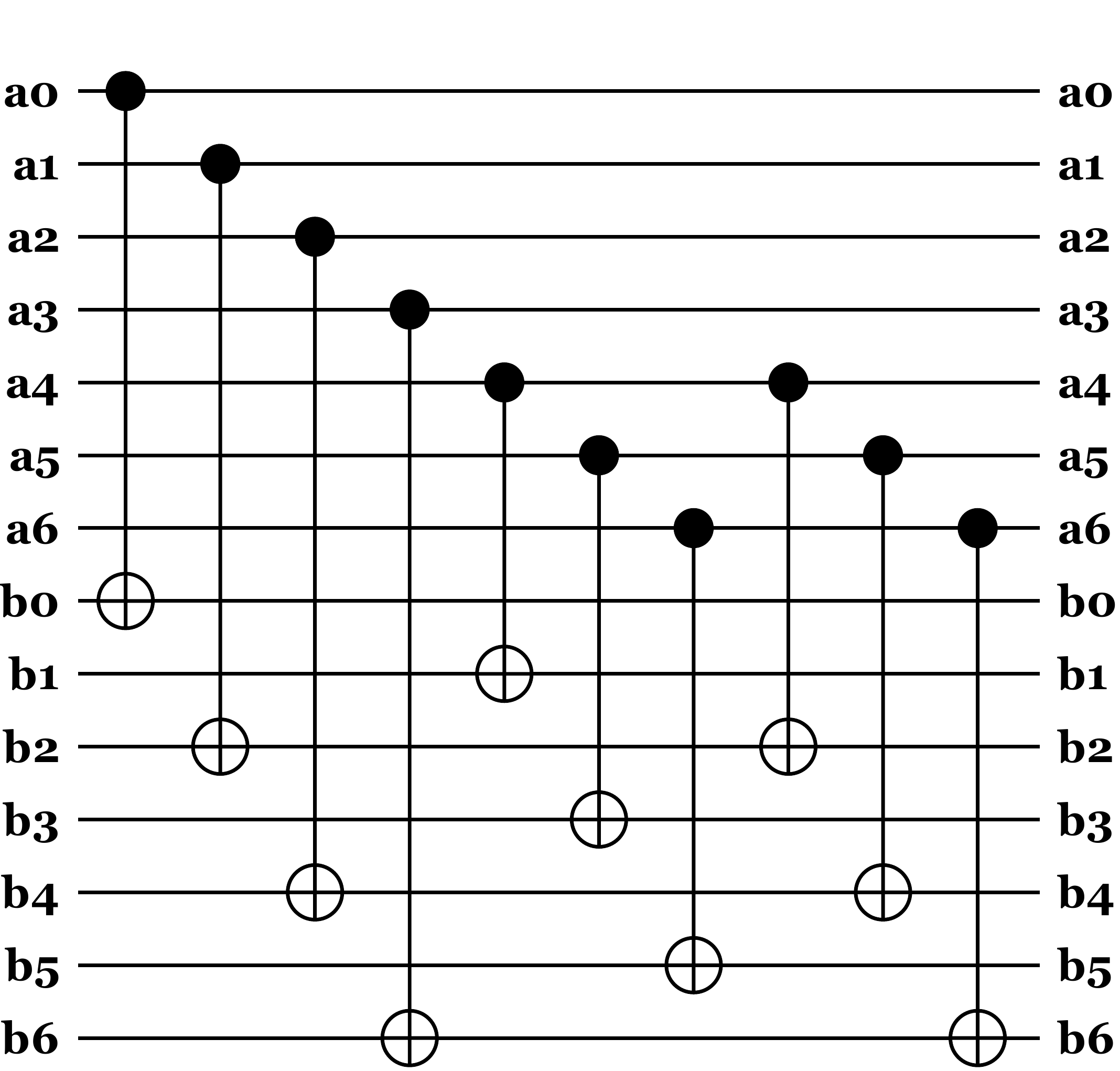}
\caption{Ancillae-free squaring of $a \in \mathbb{F}_{2}[x]/(1+x+x^7)$ in depth $2$.\label{fig:squaring}}
\end{figure}

This matrix has $10$ non-zero entries, and a maximal row or column weight of $2$. We obtain the depth $2$ circuit shown in Figure~\ref{fig:squaring} which corresponds to the following bipartite graph with chromatic index $2$:
\begin{center}
\begin{tikzpicture}[every node/.style={circle,draw}]
%  \tikzset{every loop/.style={}}
   \node at (0,0) (a6) {$a_6$};
   \node at (3,0) (b6) {$b_6$};
   \node at (0,1) (a5) {$a_5$};
   \node at (3,1) (b5) {$b_5$};
   \node at (0,2) (a4) {$a_4$};
   \node at (3,2) (b4) {$b_4$};
   \node at (0,3) (a3) {$a_3$};
   \node at (3,3) (b3) {$b_3$};
   \node at (0,4) (a2) {$a_2$};
   \node at (3,4) (b2) {$b_2$};
   \node at (0,5) (a1) {$a_1$};
   \node at (3,5) (b1) {$b_1$};
   \node at (0,6) (a0) {$a_0$};
   \node at (3,6) (b0) {$b_0$};

  \path (a0) edge (b0);
  \path (a4) edge (b1);
  \path (a1) edge (b2);
  \path[dashed] (a4) edge (b2);
  \path (a5) edge (b3);
  \path[dashed] (a5) edge (b4);
  \path (a2) edge (b4);
  \path (a6) edge (b5);
  \path[dashed] (a6) edge (b6);
  \path (a3) edge (b6);
\end{tikzpicture}
\end{center}

\end{example}
\begin{example}[ECDSA: squaring]
    The Digital Signature Standard \cite{FIPS1864} specifies five different fields for use in connection with binary elliptic curves along with a polynomial-basis representation for each of these fields. We used our software to find the depth and number of CNOT gates needed for an ancillae-free squaring operation with each of these representations. The corresponding values are listed in Table~\ref{tab:ecdsasquare}.
\begin{table}[htb]
\centering
\begin{tabular}{c c c }
\hline 
   \ irreducible polynomial & \ depth  & \ CNOT gates\\ 
\hline
\hline
$1+x^3+x^6+x^7+x^{163}$ & 8 & 415  \\
\hline
 $1+x^{74}+x^{233}$ & 3 & 386     \\ 
\hline
  $1+x^5+x^7+x^{12}+x^{283}$&  7 &  722 \\ 
  \hline
  $1+x^{87}+x^{409}$&3 &656 \\
  \hline
  $1+x^2+x^5+x^{10}+x^{571}$ &7 &1438\\
\hline
\hline
\end{tabular}
\caption{Resource count of an ancillae-free squaring operation for binary fields in \cite{FIPS1864}.}\label{tab:ecdsasquare}
\end{table}
\end{example}
The last two examples suggest that trinomials are an attractive choice for deriving compact ancillae-free squaring circuits, and this is indeed the case. The same holds true for computing the unique square root of an element in ${\mathbb F}_{2^n}$; the latter will be helpful for us, as the circuit used to establish Theorem~\ref{theo:middle} involves squarings as well as a square root computation for ``uncomputing''. To quantify the benefit of a ``trinomial basis representation'', first we can exploit that the irreducibility of $1+x^m+x^n\in{\mathbb F}_2[x]$ (with $m<n$) implies the irreducibility of $1+x^{n-m}+x^n\in{\mathbb F}_2[x]$ \cite[Fact~4.75]{HAC01}. So we may choose the middle-term to be of degree $\le\lfloor n/2\rfloor$. From the explicit formulae for a classical implementation by Rodr{\'\i}guez-Henr{\'\i}quez et al. \cite{RML08} we obtain the following.

\begin{proposition}\label{theo:middle}
 Let ${\mathbb F}_{2^n}={\mathbb F}_2[x]/(1+x^m+x^n)$ with $m\le\lfloor n/2\rfloor$. Then the map $\ket{a}\ket{c}\longmapsto\ket{a}\ket{c+a^2}$ (with variable input $c\in{\mathbb F}_{2^n}$) can be implemented with an ancillae-free quantum circuit of depth $\le m+1$ using no more than $3n$ CNOT gates.

Moreover, the map  $\ket{a}\ket{c}\longmapsto\ket{a}\ket{c+\sqrt{a}}$ can be implemented with an ancillae-free quantum circuit using no more than $5n$ CNOT gates.
\end{proposition}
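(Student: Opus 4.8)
The plan is to exploit that both maps are $\mathbb F_2$-linear and therefore amount to multiplication by a fixed matrix $M\in\GL_n(\mathbb F_2)$, so that Proposition~\ref{prop:lowdepth} reduces the entire statement to bounding the Hamming weights of the rows and columns of that matrix. For the squaring map I would start from $a^2=\sum_{i=0}^{n-1}a_i x^{2i}$ (the mixed terms vanish in characteristic $2$) and reduce each monomial $x^{2i}$ modulo $1+x^m+x^n$ via the rule $x^n\equiv 1+x^m$. The crucial step is to track how far a single monomial can spread. Since $0\le 2i\le 2n-2$, one reduction splits $x^{2i}$ into $x^{2i-n}$, whose exponent $2i-n\le n-2$ is already final, and $x^{2i-n+m}$, whose exponent is at most $n-2+m$. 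Here the hypothesis $m\le\lfloor n/2\rfloor$ is decisive: a second reduction of the shifted term produces exponents $2i-2n+m$ and $2i-2n+2m\le 2m-2\le n-2$, both final, so no monomial ever needs more than two reductions. Hence every column of the squaring matrix has weight at most $3$, giving at most $3n$ CNOT gates (collisions of monomials only cause cancellations in $\mathbb F_2$, which can only lower the weight).

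For the depth I would again invoke Proposition~\ref{prop:lowdepth}, which asks for the maximal Hamming weight over rows \emph{and} columns. The column bound $3$ is immediate from the two-reduction analysis; to bound the fan-in to a fixed output bit $b_j$ I would run a short case analysis on the parities of $m$ and $n$ (recalling that not both can be even, since otherwise $1+x^m+x^n$ would be a square and hence reducible). The parity of $j$ then forces at most three of the possible contributions to $b_j$ to coexist, so the maximal degree of the bipartite graph is bounded by $m+1$, and Proposition~\ref{prop:lowdepth} delivers depth $\le m+1$; for $m=1$ the shifted term never triggers a second reduction, so the weight is even at most $2=m+1$.

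For the square-root map I would use the decomposition $\sqrt a=\sum_{i\ \mathrm{even}}a_i x^{i/2}+\sqrt x\cdot\sum_{i\ \mathrm{odd}}a_i x^{(i-1)/2}$. The even-indexed part is a mere relabelling of wires and costs no CNOT gates, so the whole cost lies in multiplying the odd-indexed part by the constant $\sqrt x$. I would take the explicit closed form of $\sqrt x$ for the trinomial $1+x^m+x^n$ from Rodr{\'\i}guez-Henr{\'\i}quez et al.\ \cite{RML08}: when $m$ and $n$ are both odd one simply has $\sqrt x=x^{(n+1)/2}+x^{(m+1)/2}$, and the remaining admissible parity combinations are covered by their analogous formulae. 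Spreading each of the at most $n$ monomials of the odd part through these low-weight expressions and reducing modulo the trinomial---with a bounded number of reduction steps thanks again to $m\le\lfloor n/2\rfloor$---bounds the number of nonzero entries of the square-root matrix by $5n$, and Proposition~\ref{prop:lowdepth} converts this into the claimed CNOT count.

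The squaring bound is the routine part, since the two-reduction argument is clean and self-contained. The main obstacle is the square root: unlike squaring, the explicit form of $\sqrt x$---and hence the density of the associated matrix---genuinely depends on the parities of $m$ and $n$, and the mixed-parity cases (for instance $n$ odd, $m$ even) yield noticeably denser expressions than the both-odd case. Carefully going through all admissible parity combinations via \cite{RML08} to confirm that the worst case still stays within $5n$ CNOT gates is where the real work lies; I would handle the transparent both-odd case first as a model and then verify that the remaining cases only cost constant factors.
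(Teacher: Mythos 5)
Your proposal is correct and reaches the same conclusions, but it argues the two halves of the statement rather differently from the paper. The paper gets both gate counts ($3n$ and $5n$) simply by citing the explicit formulas of Rodr{\'\i}guez-Henr{\'\i}quez et al.\ \cite{RML08}, which say that each coefficient of $a^2$ (resp.\ $\sqrt a$) is a sum of at most three (resp.\ five) of the $a_i$ --- i.e.\ a \emph{row}-weight bound --- and then proves the depth bound for squaring by an explicit scheduling of the modular reduction: split $B=a_0+a_1x^2+\dots+a_{n-1}x^{2n-2}$ into $B_0+B_{10}+B_{11}$, do $B_0+B_{10}$ in depth $2-(n\bmod 2)$, and absorb the remaining reductions of $B_{11}$ in at most $m-1$ further layers. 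You instead derive a \emph{column}-weight bound of $3$ from first principles (each $x^{2i}$ needs at most two reductions because $2m-2\le n-2$), combine it with a parity argument showing the row weight is also at most $3$ (at most $2$ when $m=1$), and then let Proposition~\ref{prop:lowdepth} do all the work. This is a clean alternative; it is in fact \emph{stronger} than the paper's depth claim, since it gives depth $\le 3$ for every $m\ge 2$ rather than $\le m+1$ --- which is exactly what Table~\ref{tab:ecdsasquare} shows for the trinomials with $m=74$ and $m=87$. Your parity analysis checks out: the five candidate sources $2i,\,2i-n,\,2i-n+m,\,2i-2n+m,\,2i-2n+2m$ of a target index $j$ split by parity so that at most three survive in each of the three admissible parity classes of $(m,n)$. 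For the square root you propose to re-derive the \cite{RML08} formulas via $\sqrt a=\sum_{i\,\mathrm{even}}a_ix^{i/2}+\sqrt x\cdot\sum_{i\,\mathrm{odd}}a_ix^{(i-1)/2}$; the paper does strictly less here (it only quotes the ``at most five $a_i$ per coefficient'' fact), so your plan is more work but would succeed, and you correctly identify the mixed-parity forms of $\sqrt x$ as the only delicate point. Two small inaccuracies worth fixing: the even-indexed part of $\sqrt a$ is \emph{not} free of CNOT gates --- the map adds $\sqrt a$ onto the register $\ket c$, so each even $a_i$ still costs one CNOT (this does not threaten the $5n$ bound) --- and the phrase ``the maximal degree of the bipartite graph is bounded by $m+1$'' should read that it is bounded by $3$, which is $\le m+1$ once $m\ge 2$.
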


\begin{proof}
 Let $A:=a_0+a_1x+\dots+a_{n-1}x^{n-1}$ be a representative of an ${\mathbb F}_{2^n}$-element $a$. In \cite{RML08} explicit expressions for computing the representations of $a^2$ and $\sqrt{a}$ from $a_0,\dots,a_{n-1}$ are given. Each coefficient of $a^2$ can be obtained as a sum of at most three $a_i$s. Similarly, each coefficient of $\sqrt{a}$ can be obtained as a sum of no more than five $a_i$s.

To justify the depth bound $m$ for a squaring operation, let $B:= a_0+a_1x^2+a_2x^4+ \dots + a_{n-1}x^{2n-2}$. Then $B$ is a representative of $a^2$, and the degree of $B$ is $\le 2n-2$. To find the coefficients of $a^2$, we have to find $B\bmod {x^n+x^m+1}$, i.\,e., a representative of degree less than $n$. With $\eta:=n+(n\bmod 2)$ being the smallest even number greater or equal to $n$, we can write
$$B =\underbrace{a_0+a_1x^2+\dots+a_{(\eta/2)-1} x^{\eta-2}}_{=:B_0} + \underbrace{a_{\eta/2} x^{\eta}+\dots+a_{n-1}x^{2n-2}}_{=:B_1}.$$
No reduction is needed for $B_0$, and we have
\begin{equation*}
\begin{split}
B_1& = x^n\cdot\left(a_{\eta/2} x^{\eta-n}+\dots+a_{n-1}x^{n-2}\right)\\
& =(1+x^m)\cdot\left(a_{\eta/2}x^{\eta-n}+\dots+a_{n-1}x^{n-2}\right)\\
& = \underbrace{a_{\eta/2}x^{\eta-n}+\dots+a_{n-1}x^{n-2}}_{=:B_{10}} + \underbrace{a_{\eta/2}x^{\eta+m-n}+\dots+a_{n-1}x^{m+n-2}}_{=:B_{11}}
\end{split}\quad.
\end{equation*}
No reduction is needed for $B_{10}$, and we can compute $B_0 + B_{10}$ in depth $2-(n \bmod 2)$.% using $n$ CNOT gates.
We can reduce $B_{11}$, a polynomial of degree $\le m+n-2$, in the same way as we just did with $B_1$, and after at most $m-1$ reduction steps we obtain a representative of degree less than $n$. This increases the circuit depth at most by $m-1$, resulting in a total depth of at most $(m-1)+2-(n\bmod 2)\le m+1$.
%Since the polynomial addition in a reduction step can be realized with no more than $n$ CNOT gates, the total number of gates is at most $n+(m-1)\cdot n = m\cdot n$.
\end{proof}

\section{Adding a fixed point with reduced $T$-gate complexity}\label{sec:ellipticcurve}
All of the binary elliptic curves proposed in the Digital Signature Standard \cite{FIPS1864} fall in the class of so-called ordinary binary elliptic curves. In general, such curves can be expressed by means of a short Weierstra{\ss} equation
\begin{equation}y^2 +xy = x^3 + a_2x^2 + a_6\label{equ:weierstrass}
\end{equation}
where $a_2, a_6 \in \mathbb{F}_{2^n}$ with $a_6 \neq 0$. We write $$E_{a_2,a_6}(\mathbb{F}_{2^n}) := \{ (x,y)\in \mathbb{F}_{2^n}: y^2 +xy= x^3 + a_2x^2+a_6\} \cup \{ \mathcal O\}$$
for the set of (${\mathbb F}_{2^n}$-rational) points on such a curve. The (projective) point $\mathcal O$ is often referred to as \emph{point at infinity} and serves as neutral element in the group $E_{a_2,a_6}(\mathbb{F}_{2^n})$. With this affine representation of an ordinary binary elliptic curve, the group law is summarized in the following Algorithm~\ref{algo:solinas}, taken from \cite{Sol98}. At this $P_1=(x_1, y_1)\in E_{a_2,a_6}(\mathbb{F}_{2^n})$ and $P_2=(x_2, y_2)\in E_{a_2,a_6}(\mathbb{F}_{2^n})$.\footnote{As $(0,0)\not\in E_{a_2,a_6}(\mathbb{F}_{2^n})$, the neutral element $\mathcal O$ can be represented as $(0,0)$.}

\begin{algorithm}[h]
  \KwData{Points $P_1=(x_1, y_1)$ and $P_2=(x_2,y_2)$ on $E_{a_2,a_6}(\mathbb{F}_{2^n})$.}
  \KwResult{Point $P_3=(x_3, y_3)$ with $P_3=P_1+P_2$.}
{\If{$P_1=\mathcal O$}{return $P_2$}
\If{$P_2=\mathcal O$}{return $P_1$}
\eIf{$x_1=x_2$}{\eIf(\tcc*[f]{$P_1=-P_2$}){$y_1+y_2=x_2$}{\Return{$\mathcal O$}}
(\tcc*[f]{$P_1=P_2$}){$m = x_2+y_2/x_2$\;
 $x_3 = m^2 + m + a_2$\;
 $y_3=x_{2}^{2} + (m+1)\cdot x_3$}}(\tcc*[f]{$P_1\ne\pm P_2$})
 {$m=(y_1+y_2)/(x_1+x_2)$\;
  $x_3=m^2+m+x_1+x_2+a_2$\;
 $y_3=(x_2+x_3)\cdot m+x_3+y_2$}
\Return {$(x_3,y_3)$}}
\caption{Adding two points on an ordinary binary elliptic curve using affine coordinates.}\label{algo:solinas}
\end{algorithm}

Kaye and Zalka \cite{KaZa04} argue that to implement Shor's algorithm it it sufficient to provide a quantum circuit that implements the ``generic branch'' $P_1\ne\pm P_2$ of Algorithm~\ref{algo:solinas} for a fixed point $P_2$, and we restrict to this situation. To avoid the (costly) inversion operation, one usually implements this point addition in a projective representation. The standard projective representation $(X,Y,Z)\in{\mathbb F}_{2^n}\setminus\{(0,0,0)\}$ of an affine point $(x,y)$ satisfies $x=X/Z$ and $y=Y/Z$. Here we follow a different convention, introduced by L\'opez and Dahab \cite{LoDa98}, that has also been used for the addition circuit in \cite{ARS12b}: the affine point $(x,y)$ is represented projectively by $(X,Y,Z)$ with $x=X/Z$ and $y= Y/Z^2$. Accordingly, the curve given by Equation~\eqref{equ:weierstrass} would be expressed as
\begin{equation}
Y^2+XYZ= X^3Z+a_2X^2Z^2+a_6Z^4,\label{equ:projective}
\end{equation}
the identity element $\mathcal O$ being represented by $(X,0,0) \in \mathbb{F}_{2^n}^3\setminus\{(0,0,0)\}$. Based on an addition formula by Higuchi and Takagi \cite{HiTa00} for this type of projective representation, in \cite{ARS12b} the following result is given, where
\begin{itemize}
   \item $G_M(n)$ and $D_M(n)$ denote the number of gates and depth needed to implement an ${\mathbb F}_{2^n}$-multiplier, respectively;
 \item $G_M^T(n)$ and $D_M^T(n)$ denote the number of $T$-gates and $T$-depth needed to implement an ${\mathbb F}_{2^n}$-multiplier, respectively.
\end{itemize}

\begin{proposition}[{\cite[Proposition~3.2]{ARS12b}}]\label{prop:oldrecord}
   Let $P_2$ be a fixed point on $E_{a_2,a_6}(\mathbb{F}_{2^n})$. With the above-mentioned variant of projective coordinates, the addition $\ket{X_1}\ket{Y_1}\ket{Z_1}\ket{0}\ket{0}\ket{0}\longmapsto\ket{X_1}\ket{Y_1}\ket{Z_1}\ket{X_3}\ket{Y_3}\ket{Z_3}$ can be carried out with a quantum circuit $\mathcal C$ satisfying all of the following:
\begin{itemize}
   \item The total number of $T$-gates in $\mathcal C$ is $13\cdot G_M^T(n)$.
   \item The total number of gates in $\mathcal C$ is at most $13\cdot G_M(n)$ plus $12n^2+\bigO(n)$ (the latter being CNOT gates).
   \item The $T$-depth of $\mathcal C$ is $4\cdot D_M^T(n)$.
   \item The overall depth of $\mathcal C$ is $4\cdot D_M(n)$ plus $4n+\bigO(1)$ (the latter being CNOT gates).
\end{itemize}
This includes the cost of cleaning up ancillae. If $(X_1,Y_1,Z_1)$ is not the identity or equal to $\pm P_2$, then $(X_3, Y_3, Z_3)$ is a representation of the sum of $(X_1, Y_1, Z_1)$ and the fixed point $P_2$ in the above-mentioned variant of projective coordinates.
\end{proposition}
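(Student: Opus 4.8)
The plan is to translate the Higuchi--Takagi addition formula for the L\'opez--Dahab coordinates of Equation~\eqref{equ:projective} into a reversible circuit and then count gates operation by operation, separating the expensive, $T$-carrying general field multiplications from the $\mathbb{F}_2$-linear operations---additions, squarings, and multiplications by the fixed constants $x_2$, $y_2$, $a_2$---which cost only CNOT gates. First I would write out the mixed-addition formula for adding the variable projective point $(X_1,Y_1,Z_1)$ to the fixed affine point $P_2=(x_2,y_2)$ (so that $Z_2=1$), expressing $X_3$, $Y_3$, $Z_3$ as a sequence of assignments in which every step is either a general multiplication $\ket{a}\ket{b}\ket{0}\mapsto\ket{a}\ket{b}\ket{ab}$, a multiplication by a classically known constant, a squaring $a\mapsto a^2$, or an $\mathbb{F}_{2^n}$-addition. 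Because $P_2$ is fixed, multiplications by $x_2$, $y_2$ and by the curve constant $a_2$ are $\mathbb{F}_2$-linear and fall under Proposition~\ref{prop:lowdepth}, while the squarings fall under Proposition~\ref{theo:middle}; only the products of two data-dependent field elements require the Toffoli-based multiplier of \cite{MMCP09b} and hence contribute $T$-gates.

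Second, I would lay this out as an explicit reversible circuit on the three input registers, the three initially-zero output registers, and a small pool of ancilla registers holding intermediate products. The guiding principle is the standard compute-copy-uncompute pattern: the data-dependent products are written into ancillae, combined by CNOT layers into $X_3$, $Y_3$, $Z_3$, and then every ancilla is returned to $\ket{0^n}$ by replaying the corresponding multiplication (each multiplier being an involution on its target register). Counting the multiplications in both the forward and the clean-up phases should yield exactly $13$ general multiplications---roughly the eight forward products of the mixed-addition formula together with the replays needed to clear the ancillae---accounting for the $13\cdot G_M^T(n)$ $T$-gates and, together with the field multiplier's $G_M(n)$ gate cost, the $13\cdot G_M(n)$ term. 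Every remaining gate is a CNOT coming either from an $\mathbb{F}_{2^n}$-addition or from a linear constant-multiplication or squaring block; bounding the number of such blocks by a constant and each block by $\bigO(n^2)$ CNOTs via Proposition~\ref{prop:lowdepth} gives the claimed $12n^2+\bigO(n)$ CNOT overhead, and this already includes the ancilla clean-up.

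Third, for the depth statements I would schedule the $13$ multiplications into four sequential stages so that all multiplications within a stage act on disjoint registers and thus run in parallel; this gives the $T$-depth $4\cdot D_M^T(n)$ and contributes $4\cdot D_M(n)$ to the overall depth. The additional $4n+\bigO(1)$ depth comes from the CNOT layers interleaving the multiplication stages: each squaring or constant multiplication has depth $\bigO(n)$ by the edge-coloring bound of Proposition~\ref{prop:lowdepth} and the estimate of Proposition~\ref{theo:middle}, and summing the depths of the constantly many such blocks that lie on the critical path yields $4n+\bigO(1)$.

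The main obstacle I anticipate is the simultaneous optimization of the multiplication \emph{schedule}: one must verify that the data-dependency graph of the Higuchi--Takagi formula, once augmented with the uncomputation steps needed to clear every ancilla, really admits a four-stage parallelization using no more than $13$ multiplications total. Arranging the clean-up to reuse products already present---rather than recomputing them---without either leaving garbage in the ancillae or forcing a fifth multiplication stage is the delicate point. The linear-algebra bookkeeping for the CNOT counts and depths, by contrast, is routine once Propositions~\ref{prop:lowdepth} and~\ref{theo:middle} are in hand.
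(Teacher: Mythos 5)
The paper offers no proof of this proposition: it is imported verbatim from \cite[Proposition~3.2]{ARS12b} as the prior state of the art that Theorem~\ref{theo:main} is measured against, so there is no in-paper argument to compare your proposal to. That said, your reconstruction is consistent with what the paper does reveal about the structure of the cited proof. The surrounding text confirms that the \cite{ARS12b} circuit is built on the Higuchi--Takagi formula in L\'opez--Dahab coordinates, and the Remark following Theorem~\ref{theo:main} describes its shape: early steps add $6n$ wires, one stage runs four field multiplications in parallel (using $4A_M(n)$ ancillae), a later stage runs three more in parallel, and the whole thing is organized around exactly the compute-then-uncompute pattern you describe, with constant multiplications and squarings handled as $T$-free linear maps. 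So your separation of the thirteen $T$-carrying general multiplications from the $\bigO(n^2)$ CNOT overhead, and your four-stage parallel schedule giving $T$-depth $4D_M^T(n)$, match the approach of the original.

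The one thing to flag is that your sketch asserts rather than establishes the two numbers that carry all the content: that the dependency graph of the mixed-addition formula, augmented with uncomputation, closes out with exactly $13$ multiplications, and that these fit into exactly $4$ parallel stages. You correctly identify this scheduling as the delicate point, but a complete proof would have to exhibit the explicit assignment sequence and the stage partition (as \cite{ARS12b} does); ``roughly the eight forward products together with the replays'' is not yet a count. One small caution on a detail you could get wrong if you pushed the sketch further: not every ancilla is cleared by literally replaying its multiplication --- some intermediate values are cheaper to cancel by other reversible means (the present paper's Theorem~\ref{theo:main}, for instance, clears the output of a multiplier via a square-root circuit rather than by rerunning the multiplier), and which choice is made directly affects whether the total lands on $13$ multiplications or more.
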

To the best of our knowledge, in terms of $T$-gate complexity this is currently the most efficient quantum circuit that has been published for the ``generic addition'' of a fixed point on an ordinary binary elliptic curve.

\subsection{An addition circuit based on a formula by Al-Daoud et al.}\label{addition}
Invoking L\'opez-Dahab coordinates as described above, in \cite{AMRK02} Al-Daoud et al. present a point addition formula which seems well suited for a quantum circuit that aims at adding a fixed point.
Besides requiring only four general multiplications, in two cases a constant multiplication and a squaring operation can naturally be combined into a single matrix-vector multiplication. More specifically, let $P_2=(x_2, y_2,1)$ be a fixed point on the curve given by Equation~\eqref{equ:projective}, and let $P_1=(X_1,Y_1,Z_1)$ be an arbitrary point on this curve (which will be given as input to our quantum circuit). We assume that ${\mathcal O}\ne P_1\ne\pm P_2$. Then a representation $(X_3, Y_3, Z_3)$ of the sum $P_1+P_2$ can be computed as follows.
$$\begin{array}{ccl ccl ccl}
   A&=& Y_1 + y_2Z_1^2, &B &=& X_1 + x_2Z_1,& C&=&B\cdot Z_1,\\
   Z_3&=&C^2, &D&=&x_2Z_3,\\
   X_3&=&A^2 + C\cdot(A+B^2+a_2C),\\Y_3&=&(D+X_3)\cdot(A\cdot C+Z_3)+(y_2+x_2)Z_3^2
\end{array}$$
The above formulation is taken from the \emph{explicit-formulas database} \cite[\emph{madd-2005-dl}]{EFDB12} (see also \cite[Chapter~13.3.1.d]{CoFr06}). To characterize the complexity of our addition circuit, it is appropriate to distinguish between the resources for general multiplication, squaring, and other matrix-vector multiplications. As manifested in Proposition~\ref{theo:middle} and Table~\ref{tab:ecdsasquare}, for certain field representations the resource count of a squaring operation is remarkably modest, even for cryptographically significant field sizes. So in the sequel we write $G_S(n)\le n^2-n+1$ for the number of (CNOT) gates needed to implement a squaring operation with the underlying representation of ${\mathbb F}_{2^n}$ and analogously $D_S(n)\le n$ for the depth of such a circuit. The number of qubits needed in our construction will depend on the details of the underlying ${\mathbb F}_{2^n}$-multiplier. So to quantify the number of qubits, we assume that the multiplication of any $a,b\in{\mathbb F}_{2^n}$---i.\,e., the function $\ket{a}\ket{b}\ket{c}\mapsto\ket{a}\ket{b}\ket{c+a\cdot b}$ with $c\in{\mathbb F}_{2^n}$ arbitrary---is realized with $$\underbrace{n+n}_{\text{input}}+\underbrace{n}_{\text{output}}+\underbrace{A_M(n)}_{\text{ancillae}}$$ qubits. With this notation we obtain the following.

\begin{figure}[tbh]
\center
\includegraphics[width=0.95\textwidth]{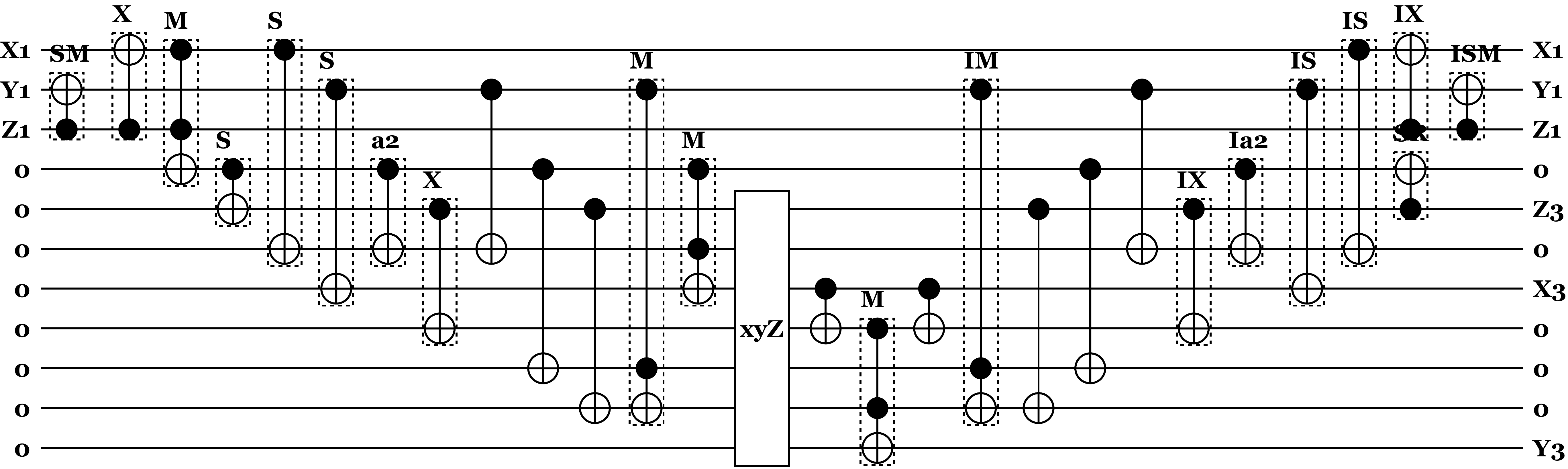}
\caption{A complete circuit adding a point in L{\'o}pez-Dahab coordinates on $E_{1,1}(\mathbb{F}_{2})$ with the fixed affine point $(1,1)$. Parsing the circuit from left to right, the initial gates labeled {\bf SM}, ${\bf X}$, and {\bf M} correspond to the operations in Steps~1--3 of the proof of Theorem~\ref{theo:main}. The subsequent three gates (labeled {\bf S}) implement the parallel squarings in Step~4. This is followed by the parallel scalar multiplications with $a_2$ ({\bf a2}) and $x_2$ ({\bf X}) from Step~5 and three CNOT gates operating on disjoint wires to implement Step~6. The two multipliers in Step~7 are realized by two Toffoli gates (marked {\bf M}), and for the sake of completeness we include a box labelled {\bf xyZ} which is actually the identity as for our specific example the value of $x_2+y_2$ in Step~7 is $0$. Step~8 corresponds to a single CNOT gate, and the subsequent Toffoli gate labelled {\bf M} implements Step~9. Starting the clean-up part of the circuit, the CNOT from Step~10 is used, followed by the reversal of a multiplier in Step~11 ({\bf IM}). Step~12 results in three CNOT gates. This is followed by the reversal of the scalar multiplications in Step~13 ({\bf IX} and {\bf Ia2}). Reversing of the squaring operations from Step~14 is implemented by the two gates marked {\bf IS}. They can be executed in parallel with the gate marked {\bf SR}, realizing the square root computation in Step~14. Eventually, Step~15 corresponds to the gate labelled {\bf IX}, and Step~16 is realized by a singe CNOT gate ({\bf ISM}).}\label{fig:circuit}
\end{figure}

\begin{theorem}\label{theo:main}
Let $P_2$ be a fixed point on the curve $E_{a_2,a_6}(\mathbb{F}_{2^n})$. Using L{\'o}pez-Dahab coordinates, the addition $\ket{X_1}\ket{Y_1}\ket{Z_1}\ket{0}\ket{0}\ket{0}\longmapsto\ket{X_1}\ket{Y_1}\ket{Z_1}\ket{X_3}\ket{Y_3}\ket{Z_3}$ of this point can be carried out with a quantum circuit $\mathcal C$ satisfying all of the following:
\begin{itemize}
   \item The total number of $T$-gates in $\mathcal C$ is $5G_M^T(n)$.
   \item The total number of gates in $\mathcal C$ is at most $5G_M(n)$ plus $5G_S(n)+10n^2-2n+10$ (the latter being CNOT gates).
   \item The $T$-depth of $\mathcal C$ is $4D_M^T(n)$.
   \item The overall depth of $\mathcal C$ is $3D_M(n)+\max(D_M(n),n)$ plus $D_S(n)+7n+4$ (the latter being CNOT gates).
   \item The total number of qubits, including the $3n$ qubits for storing the input, is $11n+4A_M(n)$.
\end{itemize}
This includes the cost to clean up ancillae. If $(X_1,Y_1,Z_1)$ is not the identity or equal to $\pm P_2$, then $(X_3, Y_3, Z_3)$ is a representation in L{\'o}pez-Dahab coordinates of the sum of $(X_1, Y_1, Z_1)$ with~$P_2$.
\end{theorem}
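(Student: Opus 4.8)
My plan is to synthesize the formula of Section~\ref{addition} as a straight-line reversible program that writes $(X_3,Y_3,Z_3)$ into the three output registers, and then to uncompute every scratch register, charging each elementary operation to one of the five cost buckets in the statement. I begin by classifying the operations of the formula. Exactly four subexpressions are honest multiplications of two \emph{variable} ${\mathbb F}_{2^n}$-operands: $C=B\cdot Z_1$; the product $C\cdot(A+B^2+a_2C)$ that feeds $X_3$; the product $A\cdot C$; and $(D+X_3)\cdot(A\cdot C+Z_3)$ that feeds $Y_3$. Each is implemented by the black-box multiplier, contributing $G_M^T(n)$ $T$-gates, $G_M(n)$ gates, $D_M^T(n)$ $T$-depth and an $A_M(n)$-qubit ancilla block. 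Everything else is ${\mathbb F}_2$-linear and hence $T$-free: the squarings $B^2$, $A^2$ and $Z_3=C^2$ (by Proposition~\ref{theo:middle}, at cost $G_S(n)$ and depth $D_S(n)$ each); the constant multiplications $x_2Z_1$, $a_2C$, $x_2Z_3$ and the two combined constant-times-square maps $y_2Z_1^2$ and $(y_2+x_2)Z_3^2$, each a single matrix-vector product bounded by the corollary to Proposition~\ref{prop:lowdepth}; and the field additions, realized by bare CNOTs.

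Next I would fix a schedule obeying the data-flow dependencies. After computing $A$ and $B$ linearly, $C=B\cdot Z_1$ occupies the first multiplier layer; once $C$ is available, the two products $C\cdot(A+B^2+a_2C)$ and $A\cdot C$ are independent and can be run in parallel in a single second layer (this is what lets five multiplications fit into four $T$-layers), and the product forming $Y_3$ then occupies a third layer. The squaring $Z_3=C^2$, the constant multiplications and the additions are interleaved so that, along the critical path, they add only $D_S(n)$ plus an $\bigO(n)$ term to the CNOT depth, each constant multiplication costing depth at most $n$. I would place intermediate values in three ancilla registers holding $A,B,C$ together with two reusable scratch registers (for $A+B^2+a_2C$, for $x_2Z_3$, and for $A\cdot C+Z_3$); with the $6n$ input/output qubits this accounts for the $11n$ data qubits, while the four forward multipliers (the one later reversal reusing a block) account for $4A_M(n)$.

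The main obstacle is the clean-up phase, which must return all scratch to $\ket{0}$ while preserving $(X_3,Y_3,Z_3)$. The products feeding $X_3$ and $Y_3$ must \emph{not} be reversed, since they are part of the answer; this forces a careful, asymmetric uncomputation. The register holding $A\cdot C+Z_3$ is emptied by first removing $Z_3$ with CNOTs and then reversing the $A\cdot C$ multiplier---this single multiplier reversal is the fifth multiplication, raising the total to $5G_M^T(n)$ and the $T$-depth to $4D_M^T(n)$. The key economy is that $C$ need \emph{not} be cleaned by reversing $C=B\cdot Z_1$: because $Z_3=C^2$ is retained in the output, $C=\sqrt{Z_3}$, so a single square-root circuit (Proposition~\ref{theo:middle}, at most $5n$ CNOTs) XORed into the $C$-register erases it. Since this square root acts on wires disjoint from the multiplier reversal, the two run concurrently, so the fourth multiplier layer contributes $\max(D_M(n),n)$ rather than $D_M(n)+n$ to the overall depth. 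It remains to reverse the squarings $A^2$ and $B^2$---bringing the squaring count to the stated $5G_S(n)$ (three forward, two reversed)---to reverse the constant multiplications, and to undo the linear constructions of $B$ and $A$.

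Finally I would total the costs. No operation other than the five multiplier invocations uses a $T$-gate, giving $5G_M^T(n)$ and $T$-depth $4D_M^T(n)$. For the CNOT count, the five pure squarings contribute $5G_S(n)$; the constant and combined-square multiplications together with their reversals, the square root, and all field additions contribute the remaining $10n^2-2n+10$, using $n^2-n+1$ as the worst-case weight of each constant matrix (the corollary to Proposition~\ref{prop:lowdepth}) and $n$ CNOTs per addition. Summing the CNOT depths along the critical path yields $D_S(n)+7n+4$, and adding the multiplier depths gives $3D_M(n)+\max(D_M(n),n)$. Correctness reduces to the validity of the Al-Daoud et al.\ formula for inputs with ${\mathcal O}\ne P_1\ne\pm P_2$ together with the fact that every ancilla is uncomputed, so that the realized map is exactly $\ket{X_1}\ket{Y_1}\ket{Z_1}\ket{0}\ket{0}\ket{0}\mapsto\ket{X_1}\ket{Y_1}\ket{Z_1}\ket{X_3}\ket{Y_3}\ket{Z_3}$. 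The bookkeeping of constants in the gate and depth tallies is routine but tedious; the genuinely delicate point is the uncomputation schedule just described, which keeps the extra multiplier reversals down to one.
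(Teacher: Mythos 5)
Your proposal follows essentially the same route as the paper's proof: the same Al-Daoud formula, the same five-multiplier/four-$T$-layer schedule with $C\cdot(A+B^2+a_2C)$ and $A\cdot C$ run in parallel, and the same key uncomputation trick of erasing $C$ by applying a square root to the retained $Z_3$ instead of reversing the first multiplier. The only notable discrepancies are in the depth bookkeeping: in the paper the $\max(D_M(n),n)$ term arises from running the linear map $(x_2+y_2)\cdot Z_3^2$ concurrently with the \emph{second} multiplier layer, not from overlapping the square root with the multiplier reversal (which would be delicate anyway, since the reversal of $a_2\cdot C$ still needs $C$ intact at that point), and the square root must be charged at the generic $n^2-n+1$ CNOT bound rather than $5n$, since Proposition~\ref{theo:middle} applies only to trinomial moduli.
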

\begin{proof}
To find $X_3$, $Y_3$ and $Z_3$ we proceed as follows.

\begin{enumerate}\item Using $\le (n^2-n+1)$-CNOT gates and depth $\le n$, we can compute $A=Y_1+y_2 \cdot Z_{1}^2$, i.\,e., the wires originally storing $Y_1$ now store $A$.
\item Similarly, we now compute $B=X_1+x_2\cdot Z_1$ using $\le n^2-n+1$ CNOT gates in depth $\le n$, storing $B$ in the wires originally holding $X_1$.
\item \label{step:multip}Multiplying $B$ and $Z_1$ we store $C=B\cdot Z_1$ into a new set of $n$ wires (initialized to $\ket{0}$). This increases the depth by $D_M(n)$ and uses $G_M(n)$ gates. Similarly the $T$-depth and number of $T$-gates are increased by $D_M^T(n)$ and $G_M^T(n)$, respectively.
\item Square $C$, $A$ and $B$ to obtain $Z_3$, $A^2$ and $B^2$ in parallel in depth $D_S(n)$ by using $3G_S(n)$-CNOT gates. To store the results we add $3n$ additional ($\ket{0}$-initialized) qubits. The wires holding $A^2$ will be used to store $X_3$.
\item \label{step:twicelinear}Now compute $a_2 \cdot C$ in depth $\le n$ using $\le n^2-n+1$-CNOT gates. The result of this operation is added directly to $B^2$. At the same time, compute $D=x_2\cdot Z_3$  using another $\le n^2-n+1$ CNOT gates. The latter result is stored in $n$ new ($\ket{0}$-initialized) qubits.
\item \label{step:copies}Add $A$ to the qubits holding $B^2+a_2 \cdot C$. For this, $n$ CNOT gates and depth $1$ suffice. Simultaneously we can apply $n+n$ more CNOT gates to create a copy $C'$ of $C$ and a copy $Z_3'$ of $Z_3$ in a set of $n+n$ new ($\ket{0}$-initialized) qubits. Having $C'$ available enables us to perform the next two multiplications in parallel.
\item With $\le 2G_M(n)+n^2-n+1$ gates and increasing the depth by $\le \max(D_M(n),n)$, we can now compute in parallel      \begin{itemize}
         \item $C\cdot (A+B^2+a_2\cdot C)$ and add the result onto the wires for the value $X_3$;
         \item $A \cdot C'$ and add the result onto the wires holding $Z_3'$;
         \item $(x_2+y_2)\cdot Z_3^2$ and store the result in the ($\ket{0}$-initialized) wires for the value $Y_3$.
\end{itemize}
This step increases the $T$-depth by $D_M^T(n)$ and the number of $T$-gates by $2G_M^T(n)$.
\item With $n$ CNOT gates in depth~1 we can add $X_3$ to the wires storing $D$.
\item Find $(D+X_3) \cdot (A \cdot C'+ Z_3')$ using $G_M(n)$ gates which increases the depth by $D_M(n)$ units. The result is added to the wires which are to hold $Y_3$. This step increases the number of $T$-gates by $G_M(n)$ and the $T$-depth by $D_M^T$.

At this point we have computed all of $X_3$, $Y_3$ and $Z_3$, and we are left with cleaning up ancillae and restoring the input values.

\item Add $X_3$ to the wires holding $D+X_3$ for which we need $n$ CNOT gates and which increases the circuit depth by $1$.
\item Reversing the multiplication $A\cdot C'$ takes depth $D_M(n)$ and requires $G_M(n)$ gates. This also increases the $T$-depth by $D_M^T(n)$ and the number of $T$-gates accordingly by $G^T_M(n)$.
\item To reverse the CNOT operations from Step~\ref{step:copies} we execute them again in depth $1$, using $3n$ CNOT gates.
\item Next, the two linear operations from Step~\ref{step:twicelinear} can be run backwards simultaneously, increasing the depth by $\le n$ and adding $\le 2\cdot(n^2-n+1)$ CNOT gates to the total gate count.
\item The squarings of $A$ and $B$ can be run backwards simultaneously using $2G_S(n)$ gates. Simultaneously we can apply a square root computation to $Z_3=C^2$ to cancel the output $C$ of the multiplier in Step~\ref{step:multip}. The square root computation can be done with $\le n^2-n+1$ CNOT gates, and with $D_S(n)\le n$ we see that the overall depth of this step is $\le n$.
\item Reversing the computation of $B$ takes $\le n^2-n+1$ CNOT gates and can be completed in depth $\le n$.
\item Finally, reversing the computation of $A$ increases the gate count by $\le n^2-n+1$ CNOT gates and the depth by $n$.
 \end{enumerate}
Table~\ref{tab:resources} summarizes the resource count for each of these steps. In the column for the number of qubits we count all qubits that are used on top of the $3n$~qubits necessary to represent the input $(X_1,Y_1,Z_1)$; this includes the $3n$ bits needed to store the result $(X_3, Y_3, Z_3)$. Exploiting that the multipliers are the only parts of the circuit involving $T$-gates, from this table we immediately obtain the bounds claimed.
\begin{table}[htb]
\begin{center}
\begin{tabular}{|r|c|c|c|}
\hline
step& no. gates& depth & no. qubits\\
\hline\hline
1&$n^2-n+1$&$n$&$0$\\
\hline
2&$n^2-n+1$&$n$&$0$\\
\hline
3&$G_M(n)$&$D_M(n)$&$n+A_M(n)$\\
\hline
4&$3G_S(n)$&$D_S(n)$&$3n$\\
\hline
5&$2\cdot(n^2-n+1)$&$n$&$n$\\
\hline
6&$3n$&$1$&$2n$\\
\hline
7&$2G_M(n) + n^2-n+1$&$\max(D_M(n),n)$&$n+2A_M(n)$\\
\hline
8&$n$&$1$&$0$\\
\hline
9&$G_M(n)$&$D_M(n)$&$A_M(n)$\\
\hline\hline
10&$n$&$1$&$0$\\
\hline
11&$G_M(n)$&$D_M(n)$&$0$\\
\hline
12&$3n$&$1$&$0$\\
\hline
13&$2\cdot(n^2-n+1)$&$n$&$0$\\
\hline
14&$2\cdot G_S(n)+n^2-n+1$&$n$&$0$\\
\hline
15&$n^2-n+1$&$n$&$0$\\
\hline
16&$n^2-n+1$&$n$&$0$\\
\hline
\end{tabular}
\end{center}
\caption{Resource bounds for each step of the circuit in the proof of Theorem~\ref{theo:main}.}\label{tab:resources}
\end{table}

\end{proof}

\begin{remark} Proposition~\ref{prop:oldrecord} does not give an explicit count for the number of qubits, but the proof of {\cite[Proposition~3.2]{ARS12b}} emphasizes parallelization. Step~1--3 of the latter already add $6n$ new wires to the $3n$~qubits for the input. Step~4 then executes $4$ field multiplications in parallel (invoking $4A_M(n)$ ancillae), and Step~6 runs three more multipliers in parallel, storing the result in $3n$ new wires, so it is fair to conclude that the total number of qubits is larger than the bound $11n+4A_M(n)$ established in Theorem~\ref{theo:main}.

Also, it is worth noting that the resource bounds in Theorem~\ref{theo:main} are indeed worst-case bounds. In cryptographic applications it is common to choose $a_2\in\{0,1\}$, thereby eliminating the need to implement the ($T$-gate free) computation of $a_2\cdot C$.
\end{remark}
%1. Find C= B. Z_1 which increases the depth  by D_M(n).
%2. Find A.C and C.(A+B^2+a2.C) simultaneously  which increases the depth by D_M(n).
%3. Find (D+X_3).(A.C+Z_3) which increases the depth by D_M(n).
%
%We don't have to take inverse of C.(A+B^2+a2.C) and (D+X_3).(A.C+Z_3), because we store them in the wires which give X_3 and Y_3 (output). And, finally we take inverse of C=B.Z_1 and A.C simultaneously (we can do this because there is nothing common between these operations.) which increases depth by D_M(n).

%So, the T-depth of circuit would be 4D_M(n) which is equall to the previous proposed result. And we beat previous result in the number of T-gates and qubits. Instead of giving number of qubits interm of O(n), we could give the exact number. 
The proof of Theorem~\ref{theo:main} is constructive, and we implemented a software tool which for a given irreducible polynomial $p\in{\mathbb F}_2[x]$, a curve point $Q\in E_{a_2,a_6}({\mathbb F}_2[x]/(p))$ and $a_2\in{\mathbb F}_2[x]/(p)$ generates the corresponding  quantum circuit for the ``generic addition'' of $Q$. As programming language we chose {\tt Python}, and the resulting quantum circuits are stored in a text file using the {\tt .qc} format. This format supports the grouping of gates into subcircuits, allowing a user to hide the details of, e.\,g., a field multiplier, when viewing the circuit in {\tt QCViewer}. Figure~\ref{fig:circuit} gives an example of a complete addition circuit using the curve $E_{1,1}(\mathbb{F}_{2})$ and $P=(1,1)$ as fixed point to be added. The ${\mathbb F}_2$-multiplier is realized as a Toffoli gate, requiring $A_M(1)=0$ ancillae. As detailed by Amy et al. in \cite{AMMR13}, a Toffoli gate can be decomposed into a circuit involving a total of $G_M(1)=15$~gates, $G_M^T(1)=7$ of which are $T$-gates and the remaining ones being CNOT and Hadamard gates. This can be done with a $T$-depth of $D_M^T(1)=4$ and an overall depth of $D_M(1)=8$.

By means of our software, we also experimented with larger curves. For larger curves, the detailed $T$-gate complexity of the circuit depends very much on the complexity of the underlying ${\mathbb F}_{2^n}$-multiplier. For our experiments we built on an existing {\tt Python} code by Brittanney Amento to produce a {\tt .qc} description of an ${\mathbb F}_{2^n}$-multiplier. Our software treats the multiplier basically as a black box, however. So if improved quantum circuits for ${\mathbb F}_{2^n}$-multiplication become available, integrating them with the existing code should not be a problem.

As final example, we take a look the square root computation (see Step~14 in the proof of Theorem~\ref{theo:main}) for binary fields in the Digital Signature Standard: 
\begin{example}[ECDSA: square root computation] Table~\ref{tab:ecdsasqrt} lists depth and gate counts for the ancillae-free square root computation for the binary fields in \cite{FIPS1864}. As can be seen, for the case of a ``trinomial basis'' this operation can be implemented quite efficiently.
\begin{table}[htb]
\centering
\begin{tabular}{c c c }
\hline 
   \ irreducible polynomial & \ depth  & \ CNOT gates\\ 
\hline
\hline
$1+x^3+x^6+x^7+x^{163}$ & 104 & 7399  \\
\hline
 $1+x^{74}+x^{233}$ & 6 & 591     \\ 
\hline
  $1+x^5+x^7+x^{12}+x^{283}$&  94 &  11657 \\ 
  \hline
  $1+x^{87}+x^{409}$&2 & 613 \\
  \hline
  $1+x^2+x^5+x^{10}+x^{571}$ &273 &76172\\
\hline
\hline
\end{tabular}
\caption{Resource count of an ancillae-free square root computation for binary fields in \cite{FIPS1864}.}\label{tab:ecdsasqrt}
\end{table}
\end{example}

\section{Conclusion}
The presented quantum circuit for point addition reduces an important cost parameter over the best previous solution---the number of $T$-gates can be reduced by more than 60\% without increasing $T$-depth. At the same time, the number of qubits can be reduced. The overall depth increases linearly, but in view of the savings achieved the depth increase looks acceptable. Aiming at the implementation of elliptic curve arithmetic for cryptanalytic applications, the ability to synthesize (optimized) point addition circuits automatically seems very helpful. We also hope that the concrete complexity bounds provided along with the capability to derive actual circuits in an established format simplifies quantitative comparisons and stimulates follow-up research on more efficient implementations.

\paragraph{Acknowledgments.} The authors thank Stephen Locke for helpful discussions on graph coloring and Brittanney Amento for kindly allowing us to use her {\tt Python} code to generate quantum circuits for ${\mathbb F}_{2^n}$-multiplication. RS is supported by NATO's Public Diplomacy Division in the framework of ``Science for Peace'', Project MD.SFPP 984520.

\end{document}